\newcommand{\NP}{\textsf{NP}}
\DeclareMathOperator{\E}{E}
\newtheorem{lemma}{Lemma}
\newtheorem{theorem}{Theorem}
\begin{document}
\title{Randomized Speedup of\\ the Bellman--Ford Algorithm}

\author{Michael J. Bannister and David Eppstein\\
Computer Science Department, University of California, Irvine}

\maketitle

\begin{abstract}
We describe a variant of the Bellman--Ford algorithm for single-source shortest paths in graphs with negative edges but no negative cycles that randomly permutes the vertices and uses this randomized order to process the vertices within each pass of the algorithm. The modification reduces the worst-case expected number of relaxation steps of the algorithm, compared to the previously-best variant by Yen (1970), by a factor of $2/3$ with high probability. We also use our high probability bound to add negative cycle detection to the randomized algorithm.
\end{abstract}

\section{Introduction}

The \emph{Bellman--Ford} algorithm~\cite{Bel-QAM-58,ForFul-62,Moore-ISST} is a label-correcting algorithm for the single-source shortest path problem in directed graphs that may have negatively-weighted edges, but no negative cycles. The algorithm can also be modified to detect negative cycles, when they exist. For a graph with $n$ vertices and $m$ edges, it takes $O(mn)$ time; despite its longevity, this remains the best strongly-polynomial time bound known for this version of the shortest path problem~\cite{AMO-93-Chap5}. If the graph has small integer edge weights, then some newer algorithms whose runtime depends on bounds of the edge weight may be faster~\cite{Shrt-Surv}.

\begin{figure}[b]
\centering
\includegraphics[width=0.475\textwidth]{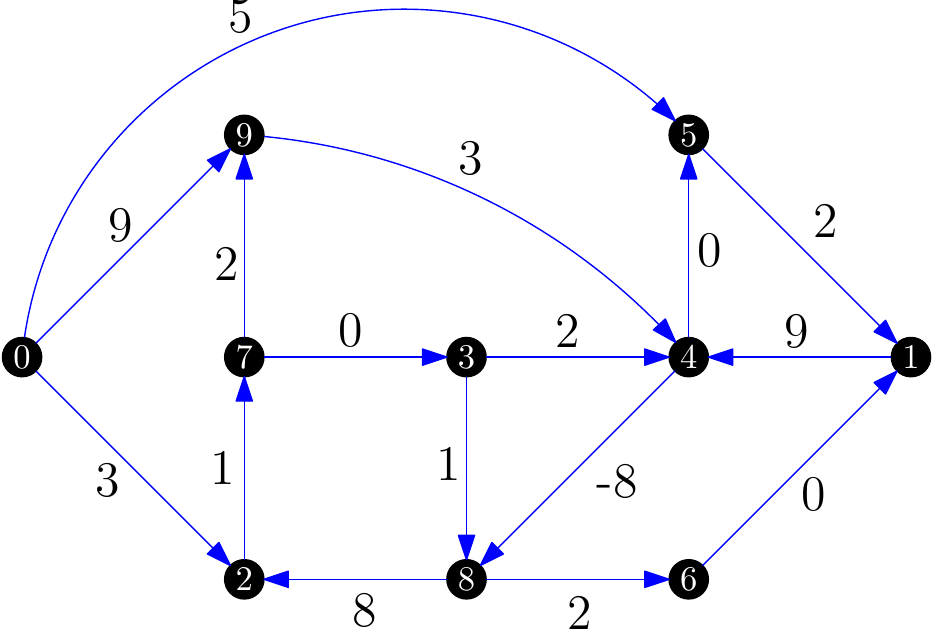}\rule{0em}{0em}\hfill
\includegraphics[width=0.475\textwidth]{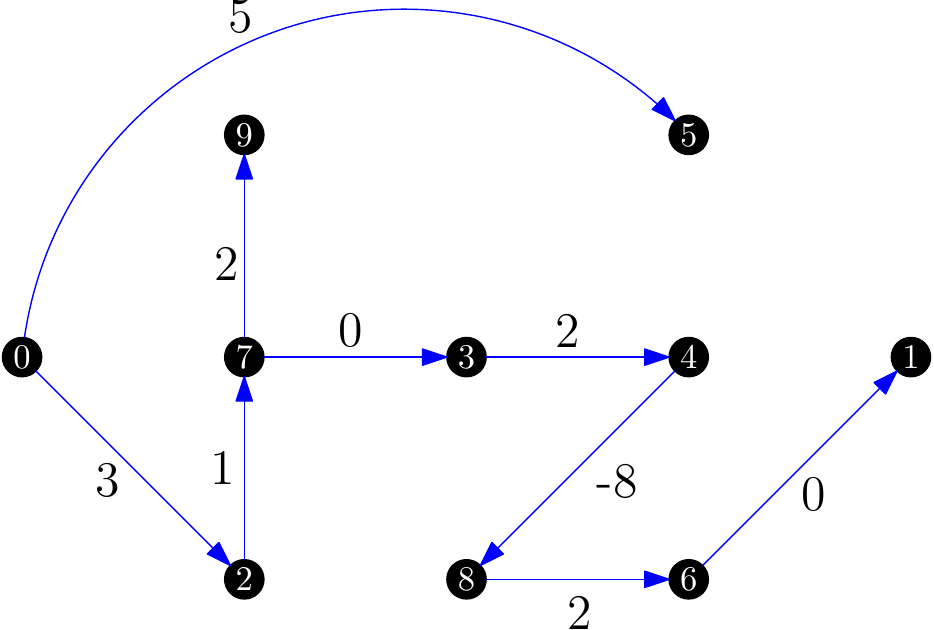}
\caption{Example of a general weighted DAG (left) with its shortest path tree (right).}
\label{fig:example}
\end{figure}

In the absence of improvements to the asymptotic complexity of the algorithm, it becomes of interest to optimize the constant factors in its running time. The bulk of the algorithm's time is spent in \emph{relaxation} or \emph{label-correction} steps in which a candidate value for the distance to a vertex is replaced by the minimum of its previous value and another number. In its most basic form, the algorithm performs at most $mn$ of these relaxation steps, but this can be improved in two ways, both due to Yen. Processing the vertices in a first-in-first-out order that avoids reprocessing vertices whose candidate distance has not changed in the previous step reduces the number of relaxation steps to less that $n^3/2$, an improvement by a factor of two for dense graphs~\cite{Yen-Book,Dreyfus-Survey,Lawler-Yen}. The second improvement, published in 1970 by Yen~\cite{Yen-QAM-70} and since repeated in several textbooks~\cite{CLRS-Yen,HuShi-Yen,Lawler-Yen,PemSki-03, Chen}, involves partitioning the input directed graph into two directed acyclic graphs and alternating between passes of the algorithm that relax the edges in one of these two DAGs. This method reduces the number of relaxation steps to $mn/2+m$, an improvement by nearly a factor of two over the original algorithm even for sparse graphs. Both improvements can be combined to yield less than $n^3/4$ relaxation steps for dense graphs~\cite{Lawler-Yen,Yen-Book, Dreyfus-Survey}.

In this paper we combine these previous ideas with an additional idea: randomly permuting the vertices of the graph, and using that permutation to order the vertices within each pass of the algorithm. As we show, this modification produces another factor of $2/3$ speedup in the expected time for the algorithm on a worst-case input. With this modification, the algorithm performs at most $mn/3+m$ relaxation steps in expectation, or $mn/3+o(mn)$ steps with high probability. For dense graphs, it takes at most $n^3/6$ steps in expectation and $n^3/6+o(n^3)$ steps with high probability. Finally, we use the high probability bounds to detect the presence of negative cycles in the input graph in $mn/3 + o(mn)$ time with high probability.

Despite the simplicity of the method, we thus obtain a large constant-factor savings in runtime. Additionally, this improvement makes an interesting test case for randomization in basic graph algorithms. Indeed, after the appearance of our initial blog posting at \url{http://11011110.livejournal.com/215330.html} describing the simplest version of this result (the expected analysis of the sparse case), the same result has also been used as a web exercise for Sedgewick's {\it Algorithms} textbook \cite{Sedgewick}.
\section{Previous Algorithms}

The Bellman--Ford algorithm is an instance of a class of algorithms known as \emph{relaxation algorithms} or \emph{label-correction algorithms} for finding shortest paths from a designated start vertex $s$ to all other vertices in a given directed graph. These algorithms maintain for each vertex $v$ a tentative distance $D[v]$ and a tentative predecessor $P[v]$, with the invariant that the tentative distance $D[v]$ is always an upper bound on the true distance $d(s,v)$ from $s$ to~$v$. Initially, $D[s]=0$ and $D[v]=+\infty$ for every $v\ne s$; $P[v]$ is undefined. Then, the algorithm performs a sequence of relaxation steps in which it calls the \texttt{relax} procedure described in Algorithm~\ref{alg:relax}.

\begin{algorithm}
\caption{Procedure \texttt{relax}$(u,v)$: relax the edge from $u$ to $v$.}
\label{alg:relax}
\begin{algorithmic}
\IF {$D[v] > D[u]+\textrm{length}(u,v)$}
  \STATE $D[v] \gets D[u]+\textrm{length}(u,v)$
  \STATE $P[v]\gets u$
\ENDIF
\end{algorithmic}
\end{algorithm}

We say that $v$ is \emph{accurate} if $D[v]$ holds the correct distance from $s$ to $v$; initially, $s$ itself is accurate and all other vertices are not. We define a \emph{correct relaxation} to be a call to \texttt{relax}$(u,v)$ for an edge from $u$ to $v$ that belongs to a shortest path from $s$ to $v$, at a time when $u$ is accurate and $v$ is not accurate. After a correct relaxation, $v$ will become accurate. The Bellman--Ford algorithm is based on the insight that, if we relax all of the edges in the graph, then at least one correct relaxation is guaranteed to occur. After $n-1$ correct relaxations, all distances must be correct. Once this happens, each $P[v]$ points to the predecessor of $v$ on a valid shortest path from $s$ to $v$.

\begin{algorithm}
\caption{The basic Bellman--Ford algorithm}
\label{alg:bellford}
\begin{algorithmic}
\FOR {$i\gets 1$ to $n-1$}
  \FOR {each edge $uv$ in graph $G$}
    \STATE \texttt{relax}$(u,v)$
  \ENDFOR
\ENDFOR
\end{algorithmic}
\end{algorithm}

This version of the algorithm performs $m(n-1)$ calls to the relax procedure, regardless of the input. A simple optimization is possible: only relax edges from vertices $u$ for which $D[u]$ has recently changed, since other vertices cannot lead to correct relaxations. Additionally, the algorithm may be terminated early when no recent changes exist. For sparse graphs, this may be a practical improvement but does not change the worst case running time significantly. However, for dense graphs the improvement is larger: after the $i$th iteration of the outer loop of the algorithm, $i+1$ vertices will already have their correct distances and will no longer change, so in the $i$th iteration at most $n-i$ vertices can have recently changed, and the number of relaxations within that iteration is at most $(n-1)(n-i)$. Adding this up over all iterations (and using the observation that in the first iteration of the outer loop we need only relax the edges that go out of $s$) produces a total of $(n-1)((n-1)(n-2)/2+1)< n^3/2$ relaxations, a significant improvement over the basic Bellman--Ford algorithm for dense graphs.

\begin{algorithm}
\caption{Adaptive Bellman--Ford with early termination}
\label{alg:adaptive}
\begin{algorithmic}
\STATE $C\gets\{s\}$
\WHILE {$C\ne\emptyset$}
  \FOR {each vertex $u$ in $C$}
    \FOR {each edge $uv$ in graph $G$}
      \STATE \texttt{relax}$(u,v)$
    \ENDFOR
  \ENDFOR
  \STATE $C\gets\{$vertices $v$ for which $D[v]$ changed$\}$
\ENDWHILE
\end{algorithmic}
\end{algorithm}

As Yen~\cite{Yen-QAM-70} observed, it is also possible to improve the algorithm in a different way, by choosing more carefully the order in which to relax the edges within each iteration of the outer loop so that two correct relaxations can be guaranteed for each iteration except possibly the last. Specifically, number the vertices arbitrarily starting from the source vertex, let $G^{+}$ be the subgraph formed by the edges that go from a lower numbered vertex to a higher numbered vertex, and let $G^{-}$ be the subgraph formed by the edges that go from a higher numbered vertex to a lower numbered vertex. Then $G^{+}$ and $G^{-}$ are both directed acyclic graphs, and the numbering of the vertices is a topological numbering of $G^{+}$ and the reverse of a topological numbering for $G^{-}$. Each iteration of Yen's algorithm processes each of these two subgraphs in topological order.

\begin{algorithm}
\caption{Yen's algorithm (adaptive version with early termination)}
\label{alg:yen}
\begin{algorithmic}
\STATE number the vertices arbitrarily, starting with $s$
\STATE $C\gets\{s\}$
\WHILE {$C\ne\emptyset$}
  \FOR {each vertex $u$ in numerical order}
    \IF {$u\in C$ or $D[u]$ has changed since start of iteration}
      \FOR {each edge $uv$ in graph $G^{+}$}
        \STATE \texttt{relax}$(u,v)$
      \ENDFOR
    \ENDIF
  \ENDFOR
  \FOR {each vertex $u$ in reverse numerical order}
    \IF {$u\in C$ or $D[u]$ has changed since start of iteration}
      \FOR {each edge $uv$ in graph $G^{-}$}
        \STATE \texttt{relax}$(u,v)$
      \ENDFOR
    \ENDIF
  \ENDFOR
  \STATE $C\gets\{$vertices $v$ for which $D[v]$ changed$\}$
\ENDWHILE
\end{algorithmic}
\end{algorithm}

Suppose that, at the start of an iteration of the outer loop of the algorithm, vertex $u$ is accurate, and that $\pi$ is a path in the shortest path tree rooted at $s$ that starts at $u$, with all vertices of $\pi$ inaccurate. Suppose also that  all of the edges in $G^{+}$ that belong to path $\pi$ are earlier in the path than all of the edges in $G^{-}$. Then, in that single iteration, the steps that relax the edges of $G^{+}$ in a topological ordering of $G^{+}$ will correctly relax all of the edges in $\pi\cap G^{+}$, and then the steps that relax the edges of $G^{-}$ in a topological ordering of $G^{-}$ will correctly relax all of the edges in $\pi\cap G^{-}$. Therefore, after the iteration, all vertices in $\pi$ will be accurate. More generally, if $k$ is the maximum number of times that any shortest path of the given graph alternates between edges in $G^{+}$ and $G^{-}$, then after $k$ iterations of the algorithm every vertex will be accurate and after $k+1$ iterations the algorithm will terminate. Therefore, the algorithm will perform at most $km+m$ relaxation steps. For any graph, $k\le n/2$, so the algorithm performs a total of at most $mn/2+m$ relaxation steps in the worst case.

A similar analysis applies also to dense graphs. With the possible exception of the final iteration, each iteration of Yen's algorithm increases the number of accurate vertices by at least two; once a vertex becomes accurate, it can be the first argument of a relaxation operation in only a single additional iteration of the algorithm. Therefore, iteration $i$ relaxes at most $n(n-2i)$ edges. Summing over all iterations yields a total number of relaxations that is less than $n^3/4$. Experiments conducted by Yen have demonstrated the practicality of these speedups in spite of extra time needed to maintain the set of recently changed vertices~\cite{Yen-Book}.

\begin{figure}[t]
\centering
\includegraphics[width=0.475\textwidth]{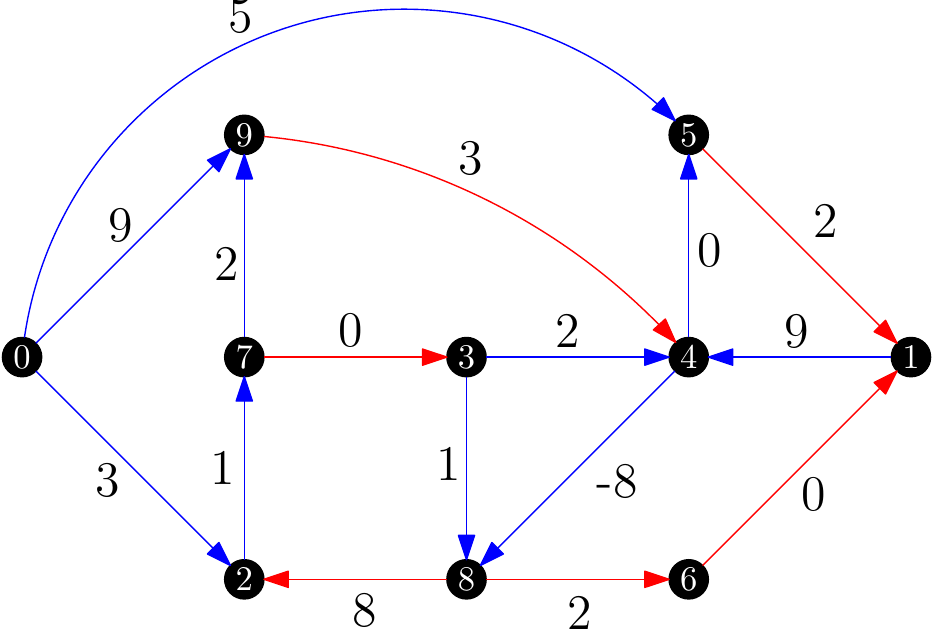}\rule{0em}{0em}\hfill
\includegraphics[width=0.475\textwidth]{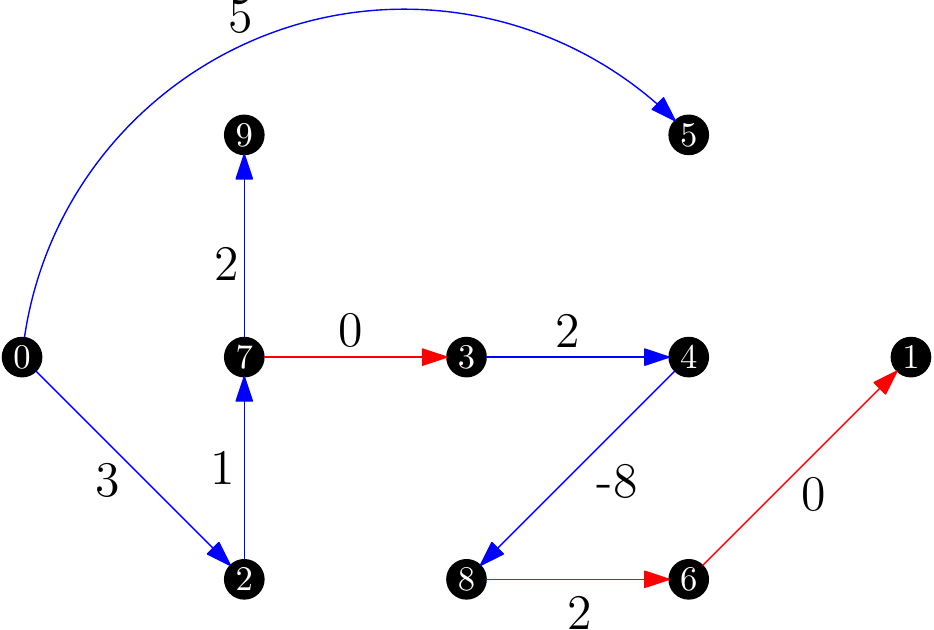}
\caption{Example from Figure \ref{fig:example} with edges in $G^-$ colored red.}
\label{fig:example2}
\end{figure}

\section{The Randomized Algorithm}

Our randomized algorithm makes only a very small change to Yen's algorithm, by choosing the numbering of the vertices randomly rather than arbitrarily. In this way, it makes the worst case of Yen's algorithm (in which a shortest path alternates between edges in $G^+$ and $G^-$) very unlikely.

\begin{algorithm}
\caption{Randomized variant of the Bellman--Ford algorithm}
\label{alg:rand}
\begin{algorithmic}
\STATE number the vertices randomly such that all permutations with $s$ first are equally likely
\STATE $C\gets\{s\}$
\WHILE {$C\ne\emptyset$}
  \FOR {each vertex $u$ in numerical order}
    \IF {$u\in C$ or $D[v]$ has changed since start of iteration}
      \FOR {each edge $uv$ in graph $G^{+}$}
        \STATE \texttt{relax}$(u,v)$
      \ENDFOR
    \ENDIF
  \ENDFOR
  \FOR {each vertex $u$ in reverse numerical order}
    \IF {$u\in C$ or $D[v]$ has changed since start of iteration}
      \FOR {each edge $uv$ in graph $G^{-}$}
        \STATE \texttt{relax}$(u,v)$
      \ENDFOR
    \ENDIF
  \ENDFOR
  \STATE $C\gets\{$vertices $v$ for which $D[v]$ changed$\}$
\ENDWHILE
\end{algorithmic}
\end{algorithm}

To analyze the algorithm, we first consider the structure of its worst-case instances.

\begin{lemma}
\label{lem:alg-is-combinatorial}
Let $G$ and $s$ define an input to Algorithm~\ref{alg:rand}. Then the number of iterations of the outer loop of the algorithm depends only on the combinatorial structure of the subgraph $S$ of $G$ formed by the set of edges belonging to shortest paths of $G$; it does not depend in any other way on the weights of the edges in $G$.
\end{lemma}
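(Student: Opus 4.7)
The plan is to show that the first iteration in which each vertex becomes accurate is determined by $S$ and the random numbering alone, and then to conclude that the number of outer-loop iterations is also determined by this data.

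First, I would establish the structural invariant that $D[v]$ can become equal to $d(s,v)$ only through a relaxation of an edge $uv\in S$ issued when $u$ is already accurate: if either $uv\notin S$ or $u$ is not yet accurate, then $D[u]+\textrm{length}(u,v) > d(s,v)$, so the relaxation cannot set $D[v]$ to $d(s,v)$. In particular, accuracy can propagate only through the subgraph $S$.

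Next, for every vertex $u$ that becomes accurate for the first time in iteration $j$, I would record whether this occurs during the forward pass or the backward pass, and then argue by induction on $i$ that the partition of newly-accurate vertices by (iteration, pass) depends only on $S$ and the numbering. The inductive step splits into four cases according to whether $uv$ lies in $G^{+}$ or $G^{-}$ and whether $u$ became accurate during a forward or a backward pass; the mechanics of Algorithm~\ref{alg:rand} (processing in numerical, then reverse numerical, order, with the activation condition that $u\in C$ or $D[u]$ changed within the current iteration) determine in each case the earliest iteration and pass in which $v$ can become accurate via the edge $uv$. Taking the minimum over all $S$-predecessors of $v$ yields a combinatorially-determined accuracy time for every vertex, depending only on $S$ and the numbering.

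Having pinned down the accuracy times, I would then observe that $D[v]$ is monotonically non-increasing and bounded below by $d(s,v)$, so once $v$ is accurate its $D$-value is frozen. Consequently, once every reachable vertex is accurate, no relaxation can change any $D$-value; the set $C$ becomes empty at the end of the next iteration and the algorithm halts. The number of iterations is therefore one more than the largest accuracy-iteration index, a quantity that depends only on $S$ and the numbering.

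The main obstacle is handling intermediate updates to $D[v]$ that do not bring $v$ all the way down to $d(s,v)$. Such updates, caused by edges outside $S$ or by relaxations from not-yet-accurate vertices, genuinely depend on the edge weights and can alter which vertices appear in $C$ from one iteration to the next. The point that needs to be pushed through carefully is that, by the structural invariant above, these weight-sensitive updates can never make any vertex accurate, and once every reachable vertex is accurate they can no longer occur at all; so they affect neither the iteration in which each vertex first becomes accurate nor the iteration in which the algorithm terminates.
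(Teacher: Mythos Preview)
Your approach is correct and shares the paper's core idea---accuracy propagates only along edges of $S$, so the set of accurate vertices after each iteration is determined by $S$ and the random numbering---but the executions differ in emphasis. The paper compresses everything into a single characterization: in each iteration a vertex $v$ becomes accurate precisely when there is a path in $S$ from some already-accurate vertex to $v$ that consists of a (possibly empty) segment in $S\cap G^{+}$ followed by a (possibly empty) segment in $S\cap G^{-}$; since this criterion is purely combinatorial, the lemma follows in one line. Your inductive four-case analysis on the pass in which the predecessor became accurate and the sign of the edge reproduces this characterization piecewise, and would benefit from stating it explicitly. On the other hand, you are more careful than the paper on a point it leaves implicit: the termination condition is ``$C=\emptyset$'' rather than ``all vertices accurate,'' and you correctly argue that weight-dependent intermediate updates can enlarge $C$ but can neither advance nor delay any accuracy time, nor survive once all reachable vertices are accurate---so the halting iteration is still one past the last accuracy iteration.
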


\begin{proof}
In each iteration, a vertex $v$ becomes accurate if there is a path $\pi$ in $S$ from an accurate vertex $u$ to $v$ with the property that $\pi$ is the concatenation of a path $\pi^+\in S\cap G^+$ with a path $\pi_-\in S\cap G^-$. This property does not depend on the edge weights.
\end{proof}

\begin{lemma}
\label{lem:worst-case-struc}
Among graphs with $n$ vertices and $m$ edges, the worst case for the number of iterations of Algorithm~\ref{alg:rand} is provided by a graph in which there is a unique shortest path tree in the form of a single $(n-1)$-edge path.
\end{lemma}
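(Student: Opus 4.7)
The plan is to argue in two reduction steps that the worst case is attained by a very specific structure. First, I would use Lemma~\ref{lem:alg-is-combinatorial} to reduce to instances in which the shortest-path subgraph $S$ is itself a spanning tree rooted at $s$. Given any $G$, pick any spanning arborescence $T \subseteq S$ rooted at $s$ and construct a graph $G'$ on the same vertex and edge sets by perturbing the weights of edges in $S \setminus T$ upward by an infinitesimal amount so they cease to lie on any shortest path, giving $S(G') = T$ and making $T$ the unique shortest-path tree of $G'$. For any outcome $\sigma$ of the random numbering, the set of vertices that become accurate in each iteration is governed by the available $\pi^+\cdot\pi^-$ propagation paths in the shortest-path subgraph; since $T \subseteq S$, accuracy propagates no faster in $G'$ than in $G$, so the iteration count on $G'$ dominates that on $G$ for every $\sigma$, and hence in expectation. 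Thus the worst case is attained by some graph whose shortest-path subgraph is a spanning tree.

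Second, I would show that among all such spanning-tree instances the Hamiltonian path rooted at $s$ maximizes the expected iteration count. Interpret the iterations combinatorially: one iteration of Algorithm~\ref{alg:rand} suffices to make a vertex $v$ accurate starting from an already-accurate $u$ iff the $u \to v$ path in the tree is a single bitonic segment (a prefix ascending in $\sigma$ followed by a suffix descending in $\sigma$). Inducting on the iteration number, after $t$ iterations a vertex $v$ is accurate iff the $s \to v$ path decomposes into at most $t$ bitonic segments, i.e., the $\sigma$-labelled vertex sequence of that path has at most $t-1$ interior valleys. Thus the iteration count on a tree $T$ under $\sigma$ equals one plus the maximum, over leaves $v$ of $T$, of the interior-valley count of the $s \to v$ path. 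I would then apply a branch-straightening induction: while $T$ has a vertex $u$ of tree-degree at least three, pick two of its child-subtrees $B_1, B_2$, detach $B_2$ from $u$, and reattach $B_2$'s root at the deepest descendant of $u$ lying inside $B_1$, obtaining a tree $T'$ with strictly fewer branching vertices. Iterating this operation terminates at a Hamiltonian path.

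The main obstacle is showing that each straightening step does not decrease the iteration count, pointwise in $\sigma$. This is nontrivial because the new long root-to-leaf path $P_{\mathrm{new}}$ in $T'$ is not the same vertex sequence as either of the two old root-to-leaf paths $P_1$ and $P_2$ through $B_1$ and $B_2$ in $T$. The key observation is that both $P_1$ and $P_2$ appear as subsequences of $P_{\mathrm{new}}$: $P_1$ is a prefix of $P_{\mathrm{new}}$, while $P_2$ is obtained from $P_{\mathrm{new}}$ by deleting the intermediate $B_1$-vertices. The proof then reduces to an elementary monotonicity lemma stating that deleting a single element from a sequence cannot increase its number of interior valleys, which I would verify by a short case analysis on whether the deleted position is itself a valley, a peak, or lies on a monotone stretch, together with the effect of its removal on the two adjacent positions. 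This yields that the interior-valley counts of $P_1$ and $P_2$ are at most that of $P_{\mathrm{new}}$, and since all other leaves of $T$ remain leaves of $T'$ with their root-to-leaf paths unchanged, the maximum over leaves cannot decrease. Finally, the $m$-edge constraint is trivially satisfied: the remaining $m - (n-1)$ edges outside the Hamiltonian path can be assigned arbitrarily large weights so they never appear on any shortest path.
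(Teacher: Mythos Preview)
Your two-step reduction (first prune $S$ to a tree, then straighten the tree to a path) is exactly the paper's strategy. The paper's second step uses a slightly different local move---when $v$ has two outgoing tree edges $vu$ and $vw$, it replaces $vw$ by $uw$, i.e., it moves the subtree at $w$ down one level---whereas you relocate an entire subtree $B_2$ to the deepest leaf of a sibling subtree $B_1$. Both moves work for the same reason: the old root-to-leaf path through $B_2$ is a subsequence of the new one, and your valley-monotonicity lemma (deleting an element cannot increase the number of interior valleys) shows the iteration count cannot drop. The paper asserts this monotonicity without proof; you make it explicit, which is a genuine improvement in rigor.

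One small correction: your termination claim that each straightening step yields ``strictly fewer branching vertices'' is not quite right---if $u$ has three or more children, detaching one still leaves $u$ branching. The correct potential is the number of leaves: $\ell_1$ ceases to be a leaf and no new leaves are created, so the leaf count strictly decreases, and the process terminates at a single-leaf tree, i.e., a path. With that fix your argument is complete.
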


\begin{proof}
Let $G$ and $s$ be an input instance for the algorithm,
and as above let $S$ be the set of edges that belong to shortest paths from $s$ in $G$. If $S$ contains two edges into a vertex $v$, then increasing the weight of one of them (causing it to be removed from $S$) can only reduce the sets of vertices that become accurate in each iteration of the algorithm, as described in Lemma~\ref{lem:alg-is-combinatorial}. Thus, the modified graph has at least as large an expected number of iterations as $G$. Similarly, if there are two edges $vu$ and $vw$ exiting vertex $v$, then replacing edge $vw$ by an edge $uw$ whose weight is the difference of the two previous edges leaves the distance to $w$ unchanged (and therefore does not change any of the rest of $S$) while increasing the number of steps from $s$ to $w$ and its descendants; again, the expected number of iterations in the modified graph is at least as large as it was prior to the modification. By repeating such modifications until no more can be performed, the result is a graph in the form given by the statement of the lemma.
\end{proof}

For the tail bounds on the runtime we will use the methods of bounded differences which is restated in Lemma \ref{lem:bounded-diff}.

\begin{lemma}[Method of Bounded Differences \cite{McDiarmid, Dubhashi-Panconesi}]
\label{lem:bounded-diff}
If $f$ is Lipschitz (w.r.t. Hamming distance) with constants $d_i$, for $1 \leq i \leq n$, and $X_i$ are independent random variables, then
\[
\Pr[f > \E[f] + t] \leq \exp \left( - \frac{2t^2}{d} \right) \quad\text{and}\quad
\Pr[f > \E[f] - t] \leq \exp \left( - \frac{2t^2}{d} \right)
\]
where $d = \sum d_i^2$.
\end{lemma}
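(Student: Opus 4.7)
The plan is to derive the bound from the Azuma--Hoeffding martingale inequality applied to the Doob martingale of $f(X_1,\ldots,X_n)$. This is the standard route for McDiarmid-type concentration: replace the function by a sequence of conditional expectations that reveals the $X_i$ one at a time, bound each increment using the Lipschitz hypothesis, and then invoke a sub-Gaussian martingale tail bound.

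Concretely, I would set $Z_0 = \E[f]$ and $Z_i = \E[f(X_1,\ldots,X_n) \mid X_1,\ldots,X_i]$ for $i=1,\ldots,n$, so $Z_n = f$ and $(Z_i)$ is a martingale with respect to the filtration generated by the $X_i$. The key structural step is to show that each increment $Y_i := Z_i - Z_{i-1}$ is almost surely bounded in an interval of width at most $d_i$. Using independence of the $X_j$, we can write
\[
Z_i - Z_{i-1} = \E\bigl[f(X_1,\ldots,X_n)\mid X_1,\ldots,X_i\bigr] - \E\bigl[f(X_1,\ldots,X_n)\mid X_1,\ldots,X_{i-1}\bigr],
\]
and by conditioning on $X_1,\ldots,X_{i-1}$ and coupling two copies that differ only in the $i$th coordinate, the Lipschitz property forces the supremum and infimum of $Y_i$ (taken over realizations of $X_i$) to differ by at most $d_i$. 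Hoeffding's lemma then gives $\E[e^{\lambda Y_i}\mid X_1,\ldots,X_{i-1}] \le \exp(\lambda^2 d_i^2/8)$.

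Iterating this conditional moment generating function bound across $i=1,\ldots,n$ yields $\E[e^{\lambda (f-\E f)}] \le \exp(\lambda^2 d/8)$ where $d=\sum_i d_i^2$. A Chernoff/Markov step with the optimal choice $\lambda = 4t/d$ produces the upper tail $\Pr[f - \E f > t] \le \exp(-2t^2/d)$. The lower tail follows by applying the same argument to $-f$, which is Lipschitz with the same constants.

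The main obstacle is the martingale difference bound: one has to make precise the coupling that lets us replace the inner and outer conditional expectations by evaluations of $f$ on inputs that agree except in the $i$th coordinate, and then invoke the Lipschitz hypothesis coordinate-wise. Once this is set up cleanly using independence (so that marginalizing over $X_{i+1},\ldots,X_n$ commutes with fixing $X_i$), the remainder is routine Chernoff bookkeeping.
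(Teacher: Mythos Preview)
The paper does not prove this lemma at all; it is stated as a known result with citations to McDiarmid and to Dubhashi--Panconesi, and is then used as a black box in the proof of Lemma~\ref{lem:sequence}. Your outline is the standard proof of McDiarmid's inequality (Doob martingale plus Hoeffding's lemma plus Chernoff optimization) and is correct, so there is nothing to compare against in the paper itself.
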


From our previous analysis of Yen's algorithm we see that each iteration processes the vertices on a shortest path up to the first local minimum in the sequence of vertex labels. For this reason we will be interested in the distribution of local minima in random sequences. The problem of counting local minima is closely related to the problem of determining the length of the longest alternating subsequence \cite{ALT1, ALT2}.

\begin{figure}[b]
\centering
\includegraphics[width=0.475\textwidth]{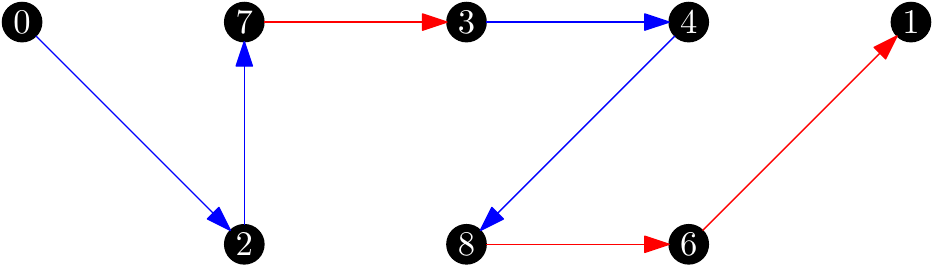}\rule{0em}{0em}\hfill
\includegraphics[width=0.475\textwidth]{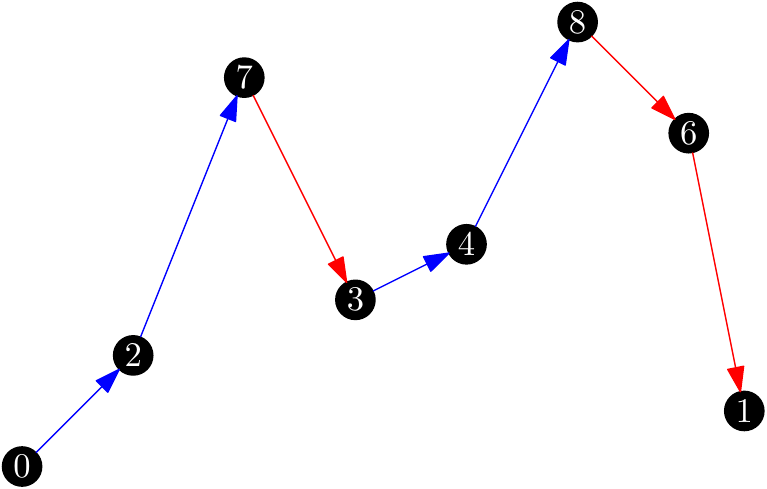}
\caption{Longest shortest path from Figure \ref{fig:example} (left) with height used to represent vertex label (right).}
\label{fig:example3}
\end{figure}

\begin{lemma}
\label{lem:sequence}
If $X_1, \ldots, X_n$ is a sequence of random variables for which ties have probability zero and each permutation is equally likely (e.g. i.i.d. real random variables), then
\begin{enumerate}[\quad(1)]
\item the expected number of local minima is $(n-2)/3$ not counting endpoints;
\item and, the probability that there are more than 
\[
\frac{n-2}{3} + \sqrt{2cn \log n} \leq \frac{n-2}{3}\left(1 + 3\sqrt{2} \sqrt{\frac{c\log n}{n}}\right)
\] local minima is at most $1/n^c$.
\end{enumerate}
\end{lemma}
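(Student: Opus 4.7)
For part (1) I would use linearity of expectation. For each interior index $i$ with $2 \le i \le n-1$, the event that $X_i$ is a local minimum is exactly the event $X_i = \min(X_{i-1}, X_i, X_{i+1})$. Since these three variables are exchangeable and ties have probability zero, each is equally likely to attain the minimum, so the probability is $1/3$. Summing the indicators over the $n-2$ interior positions gives $\E[f] = (n-2)/3$, where $f$ denotes the number of local minima.

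For part (2) the plan is to apply Lemma~\ref{lem:bounded-diff} to $f$. Because $f$ depends only on the induced permutation of the values, there is no loss in taking $X_1, \ldots, X_n$ to be i.i.d.\ continuous real random variables, which supplies the independence that the lemma requires. The main work is pinning down the Lipschitz constants $d_i$. A change to coordinate $X_i$ can only affect the local-minimum indicators at positions $i-1$, $i$, $i+1$, which naively yields $d_i \le 3$. The key sharpening is to observe that these three indicators are forced to move in opposite directions: increasing $X_i$ can only flip the indicator at position $i$ from $1$ to $0$ (both of its defining inequalities become harder to satisfy), while the indicators at positions $i-1$ and $i+1$ can only flip from $0$ to $1$ (the single inequality involving $X_i$ in each case becomes easier); decreasing $X_i$ reverses all three directions. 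Hence $|f(X) - f(X')| \le 2$ whenever $X$ and $X'$ differ in a single coordinate, so $d_i \le 2$ and $\sum_i d_i^2 \le 4n$.

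Substituting into the method of bounded differences yields
\[
\Pr\bigl[f > \E[f] + t\bigr] \;\le\; \exp\!\left(-\frac{2t^2}{4n}\right) \;=\; \exp\!\left(-\frac{t^2}{2n}\right),
\]
and the choice $t = \sqrt{2cn\log n}$ makes the right-hand side equal to $n^{-c}$. Combined with $\E[f] = (n-2)/3$ from part (1), this is exactly the claimed probability bound; the secondary inequality that re-expresses the threshold in the factored form $\frac{n-2}{3}\bigl(1 + 3\sqrt{2}\sqrt{c\log n / n}\bigr)$ is then a routine algebraic comparison of $\sqrt{2cn\log n}$ with a multiple of $(n-2)/3$. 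The one step I expect to be the main obstacle is the sharpening of the per-coordinate Lipschitz constant from the naive value $3$ to $2$: without the cancellation between the indicator at $i$ and its two neighbours, the factor in the tail bound would come out as $9/2$ rather than $2$, and the clean $\sqrt{2cn\log n}$ form would not be attainable.
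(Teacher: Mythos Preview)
Your argument is correct and follows the same route as the paper's proof: linearity of expectation with the $1/3$ per-triple probability for part (1), and the method of bounded differences with Lipschitz constant $d_i=2$ for part (2). The paper simply asserts the bound $d_i\le 2$ without justification, whereas you supply the monotonicity argument (and the reduction to genuinely i.i.d.\ variables so that Lemma~\ref{lem:bounded-diff} applies), so your write-up is if anything more complete.
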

\begin{proof}
For (1) notice that there are six ways that $X_{j-1}, X_j, X_{j+1}$ may be ordered when $1 < j<n$, and two of these orderings make $X_j$ a local minima. For (2) let $f(X_1, \ldots, X_n)$ equal the number of local minima in the sequence. Changing any one of the $X_i$ changes the value of $f(X_1, \ldots, X_n)$ by at most $2$. Hence by Lemma~\ref{lem:bounded-diff} with $t = \sqrt{2cn\log n}$ the statement in $(2)$ holds. 
\end{proof}

\begin{theorem}
\label{thm:run-time}
The expected number of relaxations performed by Algorithm~\ref{alg:rand} (on a graph with at least three vertices) is at most $mn/3+m$, and the number of relaxations is less than
\[
\frac{mn}{3} + m +m\sqrt{2cn\log n} \leq \left( \frac{mn}{3} + m \right) \left(1 + 3\sqrt{2}\sqrt{\frac{c\log n}{ n}}\right)
\]
with probability at least $1 - 1/n^c$.
\end{theorem}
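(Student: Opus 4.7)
The plan is to combine the structural reduction of Lemma~\ref{lem:worst-case-struc} with the local-minimum analysis of Lemma~\ref{lem:sequence}. By Lemma~\ref{lem:worst-case-struc} it suffices to bound the number of iterations on an input whose shortest path tree is a single $(n-1)$-edge path $s = v_1, v_2, \ldots, v_n$. Because each outer-loop iteration relaxes every edge at most once and so performs at most $m$ relaxations, the total number of relaxations is at most $m$ times the number of iterations, and the task reduces to bounding the iteration count.

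The main combinatorial step is to count iterations in terms of local minima along the path. Reprising the argument used for Yen's algorithm in Section~2, I claim that if $v_j$ is the deepest accurate vertex on the path at the start of an iteration, then the $G^+$ pass (which processes vertices in increasing label order) advances the accurate frontier through every consecutive rising path edge starting from $v_j$, and the subsequent $G^-$ pass further advances it through every consecutive falling path edge; the frontier stops exactly at the next interior local minimum of the label sequence along the path, which then becomes the starting vertex for the next iteration. Iterating, the number of iterations needed to make every path vertex accurate equals $1 + L$, where $L$ is the number of strict interior local minima in $1,\ell_2,\ldots,\ell_n$; one additional iteration is then spent on the termination check, so the algorithm performs at most $(L+2)\,m$ relaxations.

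The random numbering in Algorithm~\ref{alg:rand} makes $(\ell_2,\ldots,\ell_n)$ a uniformly random permutation of $\{2,\ldots,n\}$. Position $j=2$ is never a local minimum because $\ell_1 = 1 < \ell_2$, while for each $j \in \{3,\ldots,n-1\}$ the three consecutive labels are in uniformly random relative order, making $v_j$ a local minimum with probability $1/3$, exactly as in the proof of Lemma~\ref{lem:sequence}(1). Hence $\E[L] = (n-3)/3$, and the expected number of relaxations is at most $m((n-3)/3 + 2) = mn/3 + m$.

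For the tail bound, view $L$ as a function of $n-1$ independent random variables that determine the labeling (for example, i.i.d.\ continuous random variables whose rank-order induces the permutation). Changing one variable alters at most three of the local-minimum indicators, so each Lipschitz constant is at most~$2$, matching the setting of Lemma~\ref{lem:sequence}(2). Applying Lemma~\ref{lem:bounded-diff} with $t = \sqrt{2cn\log n}$ gives $\Pr[L \geq (n-3)/3 + \sqrt{2cn\log n}] \leq 1/n^c$, and plugging this into the $(L+2)\,m$ bound yields $mn/3 + m + m\sqrt{2cn\log n}$ with probability at least $1 - 1/n^c$. I expect the iteration-counting claim to be the main obstacle, since it requires careful tracking of the interplay between the $G^+$ and $G^-$ passes and the condition that relaxes edges only from vertices in $C$ or whose $D$-value has changed during the iteration; the expectation and concentration arguments then follow immediately from the lemmas already proved.
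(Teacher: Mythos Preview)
Your proposal is correct and follows essentially the same approach as the paper: reduce to the single-path worst case via Lemma~\ref{lem:worst-case-struc}, identify the number of outer-loop iterations with the number of interior local minima of the label sequence along the path (plus two for the first and last iterations), and then invoke Lemma~\ref{lem:sequence} for both the expectation and the bounded-differences tail. The only cosmetic difference is that the paper phrases the local-minimum argument contrapositively (the last accurate vertex $v$ must precede both its path-neighbours $u,w$ in the random order), whereas you describe it constructively (the $G^+$ pass climbs rising edges, the $G^-$ pass descends falling edges until the next local minimum); these are the same observation.
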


\begin{proof}
Let $G$ be a worst-case instance of the algorithm, as given by Lemma~\ref{lem:worst-case-struc}. In each iteration of the algorithm other than the first and last, let $v$ be the last accurate vertex on the single maximal shortest path in $G$. Since this is neither the first nor the last iteration, $v$ must be neither the first nor the last vertex on the path; let $u$ be its predecessor and let $w$ be its successor.  Then, in order for $v$ to have become accurate in the previous iteration without letting $w$ become accurate as well, it must be the case that $v$ is the first of the three vertices $\{u,v,w\}$ in the ordering given by the random permutation selected by the algorithm: if $u$ were first then edge $uv$ would belong to $G^+$ and no matter whether edge $vw$ belonged to $G^+$ or $G^-$ it would be relaxed later than $uv$ in the same iteration. And if $w$ were first then $vw$ would belong to $G^-$ and would be relaxed later than $uv$ in each iteration no matter whether $uv$ belonged to $G^+$ or $G^-$.

Thus, we may bound the expected number of iterations of Algorithm~\ref{alg:rand} on this input by bounding the number of vertices $v$ that occur earlier in the random permutation than both their predecessor and their successor in the shortest path, i.e., the local minima in sequence of labels. The start vertex $s$ is already assumed accurate so applying Lemma~\ref{lem:sequence} to the remaining $n-1$ vertices yields $(n-3)/3$ iterations for the interior vertices. Therefore, the expected number of iterations is $2+(n-3)/3=(n+3)/3$. Each iteration relaxes at most $m$ edges, so the total expected number of relaxations is at most $mn/3+m$. An application of the second part of Lemma~\ref{lem:sequence} finishes the proof.
\end{proof}

Lemma~\ref{lem:worst-case-struc} does not directly apply to the dense case, because we need to bound the number of relaxations within each iteration and not just the number of iterations. Nevertheless the same reasoning shows that the same graph (a graph with a unique shortest path tree in the form of a single path) forms the worst case of the algorithm.

\begin{theorem}
For dense graphs the expected number of relaxations performed by Algorithm~\ref{alg:rand} is at most $n^3/6$, and the number of relaxations is less than
\[
\frac{n^3}{6} + \sqrt{2} n^{5/2} \sqrt{c\log n} \leq \frac{n^3}{6} \left(1+ \sqrt{2}\sqrt{\frac{c\log n}{n}}\right)
\]
with probability $1 - 1/n^{c-1}$.
\end{theorem}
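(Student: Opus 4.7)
The plan is to parallel the proof of Theorem~\ref{thm:run-time}, but to upgrade its ``each iteration uses at most $m$ relaxations'' bound to the tighter ``each iteration uses at most $n$ relaxations per active source vertex'' bound that drives the dense-graph analysis of Yen's deterministic algorithm. My first step is to invoke the same swap argument as in Lemma~\ref{lem:worst-case-struc}: the worst case is a graph whose shortest-path subgraph is a single $(n-1)$-edge path $s=v_0\to v_1\to\cdots\to v_{n-1}$, padded with extra edges that do not lie on any shortest path. Since $D$-values of accurate vertices never decrease, the accurate set at any point remains a prefix $\{v_0,\ldots,v_{k_i-1}\}$ of this path; in particular, the only possible sources of relaxations in iteration $i$ are the vertices of the inaccurate suffix together with the vertex that just became accurate in iteration $i-1$, for a total of at most $n-k_i+1$ sources. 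Each source has out-degree at most $n-1$, so iteration $i$ performs at most $n(n-k_i+1)$ relaxations.

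Next I would sum this bound over iterations by the double-counting identity $\sum_i(n-k_i)=\sum_{v\ne s}\tau_v$, where $\tau_v$ is the iteration at which $v$ becomes accurate. Reusing the key observation from the proof of Theorem~\ref{thm:run-time}, that $\tau_{v_j}$ equals $1$ plus the number of interior local minima of the random labelling among positions $1,\ldots,j-1$, gives
\[
\sum_{j=1}^{n-1}\tau_{v_j}=(n-1)+\sum_{k=1}^{n-2}(n-1-k)\,L_k,
\]
where $L_k$ is the indicator that position $k$ is a local minimum. Taking expectations with $\E[L_k]=1/3$ (Lemma~\ref{lem:sequence}(1)) yields expected total relaxations at most $n^3/6+O(n^2)$.

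The concentration step applies Lemma~\ref{lem:bounded-diff} to the weighted indicator sum $f=\sum_k(n-1-k)L_k$. Since the random label $X_i$ influences only the three indicators $L_{i-1},L_i,L_{i+1}$, swapping $X_i$ changes $f$ by at most a constant times $n-i$, so $\sum_i d_i^2=O(n^3)$ and the tail of $f$ is of order $n^{3/2}\sqrt{\log n}$. Multiplying by the factor $n$ sitting in front of $f$ in the relaxation bound yields a deviation of order $n^{5/2}\sqrt{c\log n}$, and choosing the tail parameter in the spirit of Lemma~\ref{lem:sequence}(2) matches the constants in the statement.

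The main obstacle is the first step: one must verify that even with arbitrarily many extra edges padding the path, no vertex of the accurate prefix can ever be re-added to $C$, so that the source count in iteration $i$ really is bounded by $n-k_i+1$ rather than by $n$. Once that structural observation is in hand, the expectation and McDiarmid calculations are an almost verbatim reweighting of the sparse-case calculations underlying Theorem~\ref{thm:run-time} and Lemma~\ref{lem:sequence}, with the uniform per-iteration weight of $m$ replaced by the linear weight $(n-1-k)$ on local minima.
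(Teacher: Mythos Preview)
Your expectation argument is essentially the paper's, reorganized via the double-counting identity you state: the paper directly bounds the total number of relaxations by $n\sum_v \tau_v$, observing that a vertex at depth $k$ in the shortest-path tree has $\E[\tau_v]\le k/3$ and then summing $n\sum_{k=1}^n k/3\le n^3/6$ (which already presumes the single-path worst case). Your explicit reduction to the path instance and your ``main obstacle'' are handled implicitly in the paper by the phrase ``assuming the worst case that $v$ is processed in each of these iterations''; the obstacle is in fact immediate, since once $D[v]=d(s,v)$ no relaxation can lower it further, so accurate vertices never re-enter $C$. (One small correction: several vertices can become accurate in a single iteration, so the source count in iteration $i$ is bounded by $n-k_{i-1}$ rather than $n-k_i+1$; this only shifts your sum by $O(n)$ and is harmless.)

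Where you genuinely diverge is in the tail bound. The paper does \emph{not} apply McDiarmid to the weighted sum $\sum_k(n-1-k)L_k$; instead it reuses the per-vertex bound from Theorem~\ref{thm:run-time} (equivalently Lemma~\ref{lem:sequence} applied to each prefix of length $k$), obtaining $\tau_{v_k}\le k/3+\sqrt{2ck\log n}$ with probability at least $1-1/n^c$, and then takes a union bound over the $n$ vertices---this union bound is exactly why the stated probability is $1-1/n^{c-1}$ rather than $1-1/n^{c}$. The paper's route is shorter because it recycles an existing lemma verbatim and avoids any new Lipschitz calculation; your direct bounded-differences argument on the weighted sum is a bit more work but sidesteps the union bound and would actually deliver slightly better constants and a stronger probability than the theorem claims.
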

\begin{proof}
Let $v$ be a vertex in the input graph whose path from $s$ in the shortest path tree is of length $k$.
Then the expected number of iterations needed to correct $v$ is $k/3$, assuming the worst case that $v$ is processed in each of these iterations we will relax at most
\[
n\sum_{k=1}^n \frac{k}{3} \leq n^3 / 6
\] 
edges. Also, Theorem~\ref{thm:run-time} implies that $v$ will be corrected after at most
\[
k/3 + \sqrt{2ck\log n}
\]
with probability at least $1-1/n^c$. Again, assuming the worst case, the edges from $v$ will be relaxed in each iteration, we will relax at most
\[
n\sum_{k=1}^n k/3 + \sqrt{2ck\log k} \leq n^3/6 + \sqrt{2c} n^{5/2} \sqrt{\log n}
\]
edges with probability at least $1-1/n^{c-1}$.
\end{proof}

\section{Negative Cycle Detection}

If $G$ is a directed-acyclic graph with a negative cycle reachable from the source, then the distance to some vertices is effectively $-\infty$. If we insist on finding shortest simple paths, then the problem is \NP-hard\cite{Garey-Johnson}.

Because of this difficulty, rather than seeking the shortest simple paths we settle for a timely notification of the existence of a negative cycle. There are several ways in which single-source-shortest-path algorithms can be modified to detect the presence of negative cycles~\cite{Neg-Surv}. We will use what is commonly referred to as subtree traversal. After some number of iterations of the Bellman--Ford algorithm, define $G_p$ to be the parent graph of $G$; this is a graph with the same vertex set as $G$ and with an edge from $v$ to $u$ whenever the tentative distance $D[v]$ was set by relaxing the edge in $G$ from $u$ to $v$. That is, for each $v$ other than the start vertex, there is an edge from $v$ to $P[v]$. Cycles in $G_p$ correspond to negative cycles in $G$~\cite{Tarjan-DS}. Moreover, if $G$ contains a negative cycle, then after $n-1$ iterations $G_p$ will contain a cycle after each additional iteration~\cite{Neg-Surv}. We would like to lower this requirement from $n-1$ to something more in line with runtime of Algorithm~\ref{alg:rand}.

For each vertex $v$ in any input graph $G$ there exists a shortest simple path from the source $s$ to $v$; denote the length of this path by $D'[v]$. This quantity $D'[v]$ will not be calculated by our algorithm, but we will use it in our analysis. If $G$ has a negative cycle, then at some point it will be the case that $D[v] < D'[v]$ for at least one vertex $v$ in $G$.

\begin{lemma}[Cf. \cite{Neg-Surv}]\label{lem:dist2cycle}
If after an iteration of Algorithm~\ref{alg:rand} we have $D[v] < D'[v]$ for some vertex $v$, then $G_p$ has a cycle.
\end{lemma}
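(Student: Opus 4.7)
}
The plan is to argue by contradiction: assume $G_p$ is acyclic and derive that $D[v]\ge D'[v]$, contradicting the hypothesis $D[v]<D'[v]$. The engine of the proof is the invariant that, for every non-source vertex $v$ with $P[v]$ currently defined,
\[
D[v]\;\ge\;D[P[v]]+\textrm{length}(P[v],v).
\]
I would establish this invariant first, by tracking what happens each time \texttt{relax} fires: at the moment $P[v]$ is (re)assigned to some $u$, the procedure sets $D[v]=D[u]+\textrm{length}(u,v)$, so the inequality holds with equality; afterwards $D[v]$ can only decrease by a further relaxation that simultaneously overwrites $P[v]$, so we only need to worry about $D[P[v]]$ decreasing in the interim, which strengthens the inequality rather than violating it.

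With the invariant in hand, I would follow the parent chain from $v$: set $v_0=v$ and $v_{i+1}=P[v_i]$ while $P[v_i]$ is defined. Because $D[v]<D'[v]<\infty$, $v$ has been relaxed at least once and therefore has a parent pointer, and the same propagates along the chain (only the source $s$ has no parent pointer after being touched). If $G_p$ is acyclic, no vertex repeats, so the chain is finite and must terminate at $s$, yielding a sequence $v=v_0,v_1,\dots,v_k=s$ whose reversal $s=v_k,v_{k-1},\dots,v_0=v$ is a simple path $\pi$ in $G$.

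Now I would telescope the invariant along the chain:
\[
D[v]=D[v_0]\;\ge\;D[v_1]+\textrm{length}(v_1,v_0)\;\ge\;\cdots\;\ge\;D[s]+\sum_{i=0}^{k-1}\textrm{length}(v_{i+1},v_i)=\mathrm{length}(\pi),
\]
using $D[s]=0$. Since $\pi$ is a simple path from $s$ to $v$, its length is at least the minimum simple-path length $D'[v]$, so $D[v]\ge D'[v]$, contradicting the assumption. Hence $G_p$ must contain a cycle.

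The only delicate step is the invariant, because $D[P[v]]$ may continue to drop after $P[v]$ is last assigned; the observation that such a drop only widens the gap (and is never accompanied by an independent decrease of $D[v]$, since any decrease of $D[v]$ comes from a relaxation that resets $P[v]$ as well) is what keeps the argument clean. Once that is in place, the rest is a straightforward telescoping and comparison with $D'[v]$.
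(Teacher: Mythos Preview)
Your argument is correct and is the standard proof of this fact. The paper itself does not prove Lemma~\ref{lem:dist2cycle}; it simply states the result with a citation to the survey~\cite{Neg-Surv}, so there is no in-paper proof to compare against. One small point worth making explicit in your write-up: when you invoke $D[s]=0$ at the end of the telescoping, this is justified precisely because the parent chain terminated at $s$, meaning $P[s]$ is undefined and hence $s$ was never the target of a successful relaxation; had $D[s]$ dropped below $0$, then $P[s]$ would be defined, every vertex with finite $D$ would have a parent pointer, and the chain could not terminate in an acyclic $G_p$ at all---already a contradiction. With that clarification your contradiction argument via the invariant $D[v]\ge D[P[v]]+\textrm{length}(P[v],v)$ and telescoping along the parent chain is complete.
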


\begin{lemma}\label{lem:all-ssp}
After $n/3 + 1 + \sqrt{2cn\log n}$ iterations $D[v] \leq D'[v]$ for all $v$ with probability at least $1 - 1/n^{c-1}$.
\end{lemma}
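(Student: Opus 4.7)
The plan is to repeat the per-path iteration analysis of Theorem~\ref{thm:run-time} for each vertex using its shortest simple path, and then take a union bound over all vertices.

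First, fix any vertex $v\ne s$ and let $\pi_v = (s = v_0, v_1, \ldots, v_{k_v} = v)$ be a shortest simple path from $s$ to $v$ in $G$, so that $D'[v]$ equals the length of $\pi_v$ and $k_v \leq n-1$. Let $L_v$ denote the number of interior local minima of the random label sequence $\sigma(v_1), \ldots, \sigma(v_{k_v})$. The frontier-advancement argument used in the proof of Theorem~\ref{thm:run-time} transfers to $\pi_v$: tracking the largest index $j$ for which $D[v_j]$ is at most the length of $\pi_v$'s prefix up to $v_j$, each iteration of Algorithm~\ref{alg:rand} pushes this index past at least the next interior local minimum of the label sequence along $\pi_v$ (or to $v_{k_v}$), because the ``$G^+$ ascending then $G^-$ descending'' relaxation order chains correct relaxations through any ascent-then-descent run. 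Hence, after at most $L_v + 2$ iterations, $D[v] \leq D'[v]$. Two observations keep this argument valid in the presence of arbitrary extra structure in $G$: edges outside of $\pi_v$, including those forming negative cycles, can only drive $D$-values downward and so cannot violate an upper bound of the form $D[v] \leq D'[v]$; and the change-flag optimization never stalls propagation along $\pi_v$, since updates made within an iteration remain visible to later vertex-processing in the same iteration, and updates from one iteration carry into $C$ at the start of the next.

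Second, bound $L_v$ using Lemma~\ref{lem:bounded-diff}. Since $\E[L_v] = (k_v - 2)/3$ by Lemma~\ref{lem:sequence}(1) and $L_v$ is $2$-Lipschitz in each of its $k_v$ labels, taking $t = \sqrt{2cn\log n}$ gives
\[
\Pr\!\left[L_v > \frac{k_v - 2}{3} + \sqrt{2cn\log n}\right] \;\leq\; \exp\!\left(-\frac{t^2}{2k_v}\right) \;\leq\; \frac{1}{n^c},
\]
where the final inequality uses $k_v \leq n$. Combined with $k_v \leq n-1$, this yields $L_v + 2 \leq n/3 + 1 + \sqrt{2cn\log n}$ for each individual $v$ with probability at least $1 - 1/n^c$. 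A union bound over the $n$ vertices then gives the stated inequality simultaneously for all $v$ with probability at least $1 - n/n^c = 1 - 1/n^{c-1}$.

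The main obstacle is the first step: verifying that the frontier-advancement argument from Theorem~\ref{thm:run-time}, phrased there for a worst-case graph whose shortest-path tree is a single path, still applies to an arbitrary simple path in a graph that may contain negative cycles and many unrelated edges, and that the change-flag optimization cannot disrupt the per-iteration advancement. Once that transfer is nailed down, the bounded-differences concentration inequality and the union bound are entirely routine.
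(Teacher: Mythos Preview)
Your proof is correct and follows essentially the same approach as the paper's own proof: apply the per-path iteration bound from Theorem~\ref{thm:run-time} to the shortest simple path to each vertex individually, then take a union bound over the $n$ vertices to lose one power of $n$ in the failure probability. Your version is in fact more careful than the paper's, which simply asserts that ``the proof of Theorem~\ref{thm:run-time} shows'' the per-vertex bound without spelling out, as you do, why extra edges and negative cycles cannot interfere with the frontier-advancement along $\pi_v$ and why the change-flag optimization does not stall propagation.
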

\begin{proof}
Let $v$ be a vertex in the input graph, and $u_0 = s, u_1, \ldots, u_n=v$ the shortest simple path to $v$ from the source $s$. Then the proof of Theorem~\ref{thm:run-time} shows that the edges $u_0u_1, \ldots, u_{n-1}u_n$ will be relaxed in path order, and therefore $D[v] \leq D'[v]$, after $n/3 + 1 + \sqrt{2cn\log n}$ iterations with probabilty at least $1 - 1/n^c$. Combining these probabilities for the distances to individual vertices into a single probability for the whole graph, after $n/3 + 1 + \sqrt{2cn\log n}$ iterations $D[v] \leq D'[v]$ for all vertices $v$ with probability at least $1 - 1/n^{c-1}$.
\end{proof}

\begin{theorem}\label{thm:cycle-after}
If the input graph $G$ has a negative cycle reachable from the source, then this can be detected as a cycle in $G_p$ after $n/3 + 2 + \sqrt{2cn\log n}$ iterations with probability at least $1 - 1/n^{c-1}$.
\end{theorem}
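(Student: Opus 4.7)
The plan is to apply Lemma~\ref{lem:all-ssp} to bring every tentative distance at or below its shortest-simple-path value after $T := n/3 + 1 + \sqrt{2cn\log n}$ iterations, use one additional iteration to push some $D[v]$ strictly below $D'[v]$, and then invoke Lemma~\ref{lem:dist2cycle} to conclude that $G_p$ contains a cycle after $T+1 = n/3 + 2 + \sqrt{2cn\log n}$ iterations. The probability bound $1-1/n^{c-1}$ is inherited directly from Lemma~\ref{lem:all-ssp}; no further randomness is introduced by the remainder of the argument.

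Fix a negative cycle $Z$ in $G$ reachable from $s$, and condition on the high-probability event that $D_T[v] \le D'[v]$ for every vertex $v$. The key step is to show that iteration~$T+1$ strictly decreases $D[v^{\ast}]$ for at least one vertex~$v^{\ast}$. Suppose for contradiction that no $D$ value changes during iteration~$T+1$. Then for every edge $uv$ of $G$ we have $D[v]\le D[u]+\mathrm{length}(u,v)$: if $u$ is processed in iteration~$T+1$ (because $u\in C$ or $D[u]$ changed earlier in the iteration), then $uv$ was just relaxed with no effect, so the inequality holds directly; otherwise $u$ was in $C$ during some earlier iteration in which $uv$ was relaxed, and since then $D[u]$ has stayed constant while $D[v]$ can only have decreased, so the inequality has been preserved. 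Summing around $Z$ then gives $0 \le \sum_{e\in Z}\mathrm{length}(e) < 0$, a contradiction. Hence some $D[v^{\ast}]$ strictly decreases in iteration~$T+1$, so $D_{T+1}[v^{\ast}] < D_T[v^{\ast}] \le D'[v^{\ast}]$, and Lemma~\ref{lem:dist2cycle} produces a cycle in $G_p$.

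The main obstacle is this feasible-potential argument, which requires careful accounting of the adaptive early-termination structure of Algorithm~\ref{alg:rand}: one must verify that edges out of vertices not currently in $C$ still satisfy the relaxation inequality, by appealing to the earlier iteration in which each such tail was last processed and using the fact that $D$ values only decrease over time. Once this invariant is in place, the theorem reduces to a short chaining of Lemmas~\ref{lem:all-ssp} and~\ref{lem:dist2cycle} across a single extra iteration, and the probability bound $1-1/n^{c-1}$ passes through unchanged.
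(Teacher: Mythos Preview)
Your proposal is correct and follows essentially the same approach as the paper: apply Lemma~\ref{lem:all-ssp} to get $D[v]\le D'[v]$ for all $v$ after $T$ iterations, argue that the algorithm cannot stall in the presence of a negative cycle so some $D[v^\ast]$ drops strictly below $D'[v^\ast]$ in iteration $T+1$, and then invoke Lemma~\ref{lem:dist2cycle}. The paper compresses the middle step into the single sentence ``the algorithm cannot terminate when negative cycles exist, so a relaxation must happen on the next iteration''; your feasible-potential argument (summing the relaxed-edge inequalities around the cycle, with the careful accounting for tails not currently in $C$) is exactly the standard justification for that sentence and is a welcome elaboration rather than a different route.
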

\begin{proof}
After $n/3 + 1 + \sqrt{2cn\log n}$ we have $D[v] \leq D'[v]$ for all vertices $v$ with probability at least $1-1/n^{c-1}$. The algorithm cannot terminate when negative cycles exist, so a relaxation must happen on the next iteration, which will cause $D[u] < D'[u]$ for some vertex $u$.
\end{proof}

In light of Theorem~\ref{thm:cycle-after} to detect negative cycles we modify Algorithm~\ref{alg:rand} by performing a cycle detection step in $G_p$ after every iteration beyond $n/3 + 2 + \sqrt{2cn\log n}$. Since $G_p$ has only one outgoing edge per vertex, cycles in it may be detected in time $O(n)$. With probability at least $1-1/n^{c-1}$ we will only perform one round of cycle detection, and in the worst case Yen's analysis guarantees that a cycle will be found after at most $n/2$ iterations. Therefore, this version of the algorithm has similar high probability time performance to our analysis for sparse graphs that do not have negative cycles.

\section{Conclusion}

We have shown that  randomizing the vertices in a graph before applying Yen's improvement of the Bellman--Ford algorithm causes the algorithm to use $2/3$ of the number of relaxations  (either in expectation or with high probability) compared to its performance in the worst case without this optimization. This is the first constant factor improvement in this basic graph algorithm since Yen's original improvements in the early 1970s. Further we can expect practical improvements in runtime inline with Yen's observations~\cite{Yen-Book}, as we have only added a single linear time step for randomization.

Our improvement for negative cycle detection works only for our sparse graph analysis. For dense graphs, we get the same bound on the number of iterations until a negative cycle can be detected with high probability using subtree traversal, but (if a negative cycle exists) we may not be able to control the number of relaxation steps per iteration of the algorithm, leading to a worse bound on the total number of relaxations than in the case when a negative cycle does not exist. However, our high probability bounds also allow us to turn the dense graph shortest path algorithm into a Monte Carlo algorithm for negative cycle detection. We simply run the algorithm for dense graphs without negative cycles, and if the algorithm runs for more than the $n^3/6+o(n^3)$ relaxations given by our high probability bound, we declare that the graph has a negative cycle, with only a small probability of an erroneous result. We leave as an open question the possibility of obtaining an equally fast Las Vegas algorithm for this case.

\section*{Acknowledgments}
This research was supported in part by the National Science
Foundation under grant 0830403, and by the
Office of Naval Research under MURI grant N00014-08-1-1015.

\raggedright
\bibliographystyle{abuser}
\bibliography{bellford}

\begin{thebibliography}{10}

\bibitem{AMO-93-Chap5}
R.~K. Ahuja, T.~L. Magnanti, and J.~B. Orlin.
\newblock {Shortest paths: label-correcting algorithms}.
\newblock {\em Network Flows: Theory, Algorithms, and Applications},
  pp.~133{--}165. Prentice Hall, 1993.

\bibitem{Bel-QAM-58}
R.~Bellman.
\newblock {On a routing problem}.
\newblock {\em Quarterly of Applied Mathematics} 16:87{--}90, 1958.

\bibitem{Chen}
W.-K. Chen.
\newblock {\em {Theory of Nets}}.
\newblock John Wiley {\&} Sons Inc., New York, 1990.

\bibitem{Neg-Surv}
B.~V. Cherkassky and A.~V. Goldberg.
\newblock {Negative-cycle detection algorithms}.
\newblock {\em Mathematical Programming} 85(2):277{--}311, 1999,
  \href{http://dx.doi.org/10.1007/s101070050058}%
{doi:10.1007/s101070050058}.

\bibitem{Shrt-Surv}
B.~V. Cherkassky, A.~V. Goldberg, and T.~Radzik.
\newblock {Shortest paths algorithms: theory and experimental evaluation}.
\newblock {\em Mathematical Programming} 73(2):129{--}174, 1996,
  \href{http://dx.doi.org/10.1016/0025-5610(95)00021-6}%
{doi:10.1016/0025-5610(95)00021-6}.

\bibitem{CLRS-Yen}
T.~H. Cormen, C.~E. Leiserson, R.~L. Rivest, and C.~Stein.
\newblock {Problem 24{--}1: Yen's improvement to Bellman{--}Ford}.
\newblock {\em Introduction to Algorithms}, 2nd edition, pp.~614{--}615. MIT
  Press, 2001.

\bibitem{Dreyfus-Survey}
S.~E. Dreyfus.
\newblock {An appraisal of some shortest-path algorithms}.
\newblock {\em Operations Research} 17(3):395{--}412, 1969,
  \href{http://dx.doi.org/10.1287/opre.17.3.395}%
{doi:10.1287/opre.17.3.395}.

\bibitem{Dubhashi-Panconesi}
D.~P. Dubhashi and A.~Panconesi.
\newblock {\em {Concentration of Measure for the Analysis of Randomized
  Algorithms}}.
\newblock Cambridge University Press, Cambridge, 2009.

\bibitem{ForFul-62}
L.~R. Ford, Jr. and D.~R. Fulkerson.
\newblock {\em {Flows in Networks}}.
\newblock Princeton University Press, 1962.

\bibitem{Garey-Johnson}
M.~R. Garey and D.~S. Johnson.
\newblock {\em {Computers and Intractability: A guide to the theory of
  NP-completeness}}.
\newblock W. H. Freeman and Co., San Francisco, Calif., 1979.

\bibitem{ALT2}
C.~Houdr{\'e} and R.~Restrepo.
\newblock {A probabilistic approach to the asymptotics of the length of the
  longest alternating subsequence}.
\newblock {\em Electronic Journal of Combinatorics} 17(1):Research Paper 168,
  2010, \url{http://www.combinatorics.org/Volume_17/Abstracts/v17i1r168.html}.

\bibitem{HuShi-Yen}
T.~C. Hu and M.~T. Shing.
\newblock {Shortest paths in a general network}.
\newblock {\em Combinatorial Algorithms (Enlarged Second Edition)},
  pp.~24{--}27. Dover, 2002.

\bibitem{Lawler-Yen}
E.~L. Lawler.
\newblock {Improvements in Efficiency: Yen's Modifications}.
\newblock {\em Combinatorial Optimization: Networks and Matroids},
  pp.~76{--}77. Dover, 2001.

\bibitem{McDiarmid}
C.~McDiarmid.
\newblock On the method of bounded differences.
\newblock {\em Surveys in combinatorics, 1989 ({N}orwich, 1989)}, pp.~148--188.
  Cambridge Univ. Press, London Math. Soc. Lecture Note Ser. 141, 1989.

\bibitem{Moore-ISST}
E.~F. Moore.
\newblock {The shortest path through a maze}.
\newblock {\em Proc. Internat. Sympos. Switching Theory 1957, Part II},
  pp.~285{--}292. Harvard Univ. Press, 1959.

\bibitem{PemSki-03}
S.~V. Pemmaraju and S.~S. Skiena.
\newblock {\em {Computational Discrete Mathematics: Combinatorics and Graph
  Theory with Mathematica}}.
\newblock Cambridge University Press, Cambridge, 2003, p.~328.

\bibitem{Sedgewick}
R.~Sedgewick.
\newblock {\em {Algorithms}}.
\newblock Addison-Wesley Professional, 4th edition, 2011, p.~976.

\bibitem{ALT1}
R.~P. Stanley.
\newblock {Longest alternating subsequences of permutations}.
\newblock {\em Michigan Mathematical Journal} 57:675{--}687, 2008,
  \href{http://dx.doi.org/10.1307/mmj/1220879431}%
{doi:10.1307/mmj/1220879431}.

\bibitem{Tarjan-DS}
R.~E. Tarjan.
\newblock {\em {Data Structures and Network Algorithms}}.
\newblock CBMS-NSF Regional Conference Series in Applied Mathematics~44.
  Society for Industrial and Applied Mathematics (SIAM), Philadelphia, PA,
  1983.

\bibitem{Yen-QAM-70}
J.~Y. Yen.
\newblock {An algorithm for finding shortest routes from all source nodes to a
  given destination in general networks}.
\newblock {\em Quarterly of Applied Mathematics} 27:526{--}530, 1970.

\bibitem{Yen-Book}
J.~Y. Yen.
\newblock {\em {Shortest Path Network Problems}}.
\newblock Verlag Anton Hain, Meisenheim am Glan, 1975.
\newblock Mathematical Systems in Economics, Heft 18.

\end{thebibliography}

\end{document}